\newcommand\BibTeX{{\rmfamily B\kern-.05em \textsc{i\kern-.025em b}\kern-.08em
T\kern-.1667em\lower.7ex\hbox{E}\kern-.125emX}}
\newtheorem{theorem}{Theorem}[section]
\newtheorem{corollary}{Corollary}[section]
\newtheorem{definition}{Definition}[section]
\newtheorem{algorithm}{Algorithm}[section]
\newcommand{\prob}[1]{\operatorname{P}(#1)}
\newcommand{\expect}[1]{\operatorname{E}\left(#1\right)}
\newcommand{\bX}{{\bf X}}
\def\T{{\mathrm{\scriptscriptstyle T}}}
\begin{document}

\runninghead{Sacchetto and Gasparini}

\title{Proper likelihood ratio based ROC curves for general binary classification problems}

\author{Lidia Sacchetto\affilnum{1} and Mauro Gasparini\affilnum{1}}

\affiliation{\affilnum{1}Department of Mathematical Sciences ``G.L. Lagrange'', Politecnico di Torino, $10129$ Torino, Italy
}

\corrauth{Lidia Sacchetto, Department of Mathematical Sciences ``G.L. Lagrange'', Politecnico di Torino, $10129$ Torino, Italy}

\email{lidia.sacchetto@polito.it}

\begin{abstract}
Everybody writes that ROC curves, a very common tool in binary classification problems, should be optimal, and in particular concave, non-decreasing and above the 45-degree line.  Everybody uses ROC curves, theoretical and especially empirical, which are not
so. This work is an attempt to correct this incoherent behavior. Optimality stems from the Neyman-Pearson lemma, which prescribes using likelihood-ratio based ROC curves.  Starting from
there, we give the most general definition of a likelihood-ratio based classification procedure, which encompasses finite, continuous and even more complex data types.  We point out a strict relationship with a general notion of concentration of two probability measures.  We give some nontrivial examples of situations with non-monotone and non-continuous likelihood ratios. Finally, we propose the ROC curve of a likelihood-ratio based Gaussian kernel flexible Bayes classifier as a proper default alternative to the usual empirical ROC curve.
\end{abstract}

\keywords{Concentration function, Flexible Bayes, Likelihood ratio.}

\maketitle

\section{Introduction}
In a binary classification problem a new object is to be
assigned to one of two possible populations or conditions,
conveniently represented as probability laws
 $P_-$ or $P_+$. A classification rule is an algorithm
which tells under what conditions the new object is assigned to
population $P_+$, given data collected previously on a number of similar
objects, some under $P_-$ and some under $P_+$.  The data can be of
any kind: one or more categorical variates, one or more ordinal
variates, one or more quantitative variables, a time series, an image
or some other complex data. In this paper we are then using classification
as a synomim for discriminant analysis (in Statistics)
and for classification supervised learning (in Machine Learning).
 
Typically, a classification rule is indexed by a real-valued
threshold parameter $t \in \mathcal{R}$. By varying $t$, a ROC
(Receiver Operating Characteristic) curve is generated. It is defined
as the parametric two-dimensional locus
$\{(\text{FPR}(t),\text{TPR}(t)),t\in\mathcal{R}\}$, where the
false positive rate FPR is the probability the classification rule
assigns the object to population $P_+$ given the object comes from
population $P_-$ and the true positive rate TPR is the probability the
classification rule assigns the object to population $P_+$ given the
object comes from population $P_+$. A variety of other names exist, in
particular sensitivity for the TPR and specificity for
1-FPR.

ROC curves have proven to be very useful tools for binary
classification problems, as witnessed by the immense literature sprung
up in several different disciplines (Signal Processing, Statistics,
Machine Learning, Psychometry, Educational Testing) in the last 50
years.  See for example Krzanowski\cite{KH2009}, Pepe\cite{P2003} and Zou\cite{Z2012},
to mention only few relatively recent books, or consult the general
treatment of the topic classification 
in any modern Statistics
or Machine Learning textbook
\cite{HTF2008}.

It is widely recognized that classification rules based on the
likelihood ratio (LR from now on) are in some sense optimal.  For
example, Pepe\cite{P2003} lists a series of optimal properties and
Zou\cite{Z2012} further discusses optimality.  
The construction of an optimal ROC curve under general terms is possible
as long as the data is defined as a random element and the two
measures $P_-$ or $P_+$ are mutually absolutely continuous.  LR can
then be defined, and the optimal indexed classification rule simply
assigns the object to $P_+$ if the LR is greater than $t$, with the
addition of a technical randomization rule to be defined properly in
the next section.  The definition of the LR based classifier can be
considered a back-to-basics operation
\cite{GS1966, E1975}
:
optimality of our ROC curve stems
directly from the Neyman-Pearson lemma
\cite{NP1933}
,
which applies to general probability measures. The properties
of ROC curves based on the LR were essentially clear in the classic
statistical literature about Neyman-Pearson, for example
Section 3.2 in Lehmann\cite{L1986}, even if the expression ROC was not used
(the term was invented later, in the '50s). Often, the extra assumption of monotonicity of the LR is made to achieve more efficient results
\cite{Lloyd2002, Chen2016, Yu2017}
. In particular, 
Yu et al.\cite{Yu2017} exploit such an assumption to produce more efficient density and ROC estimates and notice that the resulting ROC curve is concave. However, concavity of the ROC curve is not limited to the case of monotone likelihood ratios and, in practice, the LR (not necessarily monotone) based rule and its associated ROC curve (always concave, see our discussion below)
are used much less frequently than the current technology allows for.
This work is partly an attempt to correct that, and partly
an exploration of some properties the LR based ROC curve
which have been overlooked in the references mentioned above.

Clarity of thought is improved when viewing the ROC
curve as a parameter in the traditional statistical sense, i.e. as a
function of the probability measures characterizing our data
generating process.  Keeping that in mind, we would then like to prove
and give examples for the following claims.
\begin{enumerate}
	\item The LR based classification rule produces a proper ROC curve
and can been constructed or estimated under very general conditions;
proper and improper ROC curves were elegantly discussed in
Egan\cite{E1975}, section 2.6, where the optimality of ROC curves based on
LR was clearly stated for data on the real line.
However, improper ROC curves (in particular, not concave),
continue to be used in practice, for example in the
univariate normal heteroschedastic case.

\item Particularly in the multivariate setting,
LR based classification rules and ROC curves can be
constructed (at least from a theoretical point of view,
see Section \ref{learning})
which dominate the ROC curves commonly used,
such as the ones based on optimal linear combinations (e.g. in the
multivariate normal case) and the ones based on scores obtained from logistic
regression\cite{PMS2000}.

\item Efficient estimated ROC curves based on observed
  data can be constructed (up to computational problems to be
  discussed in Section~\ref{learning}) in such a way that they are
  proper, and in particular concave and continuous; this implies in
  particular that the common staircase-shaped empirical estimates of the ROC
  curve provided by most statistical software are not always adequate
and alternatives exist.

\item The definition of the LR based classification rule for general data
  spaces is strictly connected to a general definition of
  concentration function
  \cite{CR1987} for two probability
  measures whatsoever, which generalizes the concentration definition
  given by Gini at the beginning of the XXth century. This clarifies
  that the ROC curve parameter is a theoretical quantification of the
  relationship between two probability measures, and not merely a
  descriptive tool of the performance of a classifier.

\item LR based classification rules as defined in the next section 
entail the use of a randomized classification rule in case 
the distribution of the LR contains atoms. 
Randomization is necessary to make the ROC curve a true
continuous curve
\cite{E1975}; 
without randomization the ROC curve would degenerate
to a finite set of points. This has also the advantage of unifying the
definition of the ROC curve for any pair of probability measures
$P_-$ or $P_+$ whatsoever. In particular, the finite case,
which is seldom given any attention in the ROC literature,
is encompassed under a general definition.
\end{enumerate}

Our definition of ROC curve for general data spaces is given in the next
section. The connection to a general definition of concentration function 
is given in Section~\ref{relationship}. The  section after that contains some
examples and Section~\ref{learning} includes a statistical discussion
of the issue of estimating ROC curves. Finally, a case study is
presented.

\section{Definition of the LR based ROC curve for general types of data}
\label{general}
Assume that $P_+$ and $P_-$ are absolutely continuous with respect to
one another and have densities $f_+$ and $f_-$,
respectively, with respect to a common dominating measure.
Then, without loss of generality, $f_-$ can be taken to be positive,
so that the Likelihood Ratio
\begin{equation}
\label{LR}
L = \frac{f_+}{f_-}  
\end{equation}
is a well defined nonegative random variable. As such, $L$ then has
distribution functions under $P_-$ and $P_+$, which we denote by $H_-$
and $H_+$ respectively. More precisely, for each $l \in \mathcal{R}$:
$$
H_-(l) = P_-(L \leq l)
$$
and 
$$
H_+(l) = P_+(L \leq l).
 $$ 
Next, define the quantile function associated with $H_-$ 
in the usual way as follows :
\begin{equation}
\label{quantile}
q_t  = \inf\{ y \in \mathcal{R}: H_-(y) \geq t  \} \quad 0<t<1
\end{equation}
and recall that, for any real number $l$,
$$q_t \leq l \quad \text{if and only if} \quad H_-(l) \geq t.$$

For any given value $t \in (0,1)$,
it may or may not happen that 
$t = H_-(q_t)$,
depending on whether $t$ does not correspond or does
correspond to a jump of $H_-$.
More specifically, if $t \not= H_-(q_t)$, then  $H_-(q_t^-) \leq t < H_-(q_t)$,
where the notation $^-$ indicates left limits (nothing to do with $P_-$), 
a particularly relevant occurrence for the discussion below. 

$H_-$ and $H_+$ may have jumps,
even if $P_+$ and $P_-$ are, say, absolutely continuous
laws on the real line.
$H_-$ and $H_+$ do not have jumps for, say,
two normal (or Gaussian) probability measures,
but $P_+$ and $P_-$ may be absolutely continuous
yet $L$ be a finite random variable
which takes on a finite set of values, almost surely.
This happens, for example, if $P_+$ and $P_-$ have
piecewise constant densities; we will provide an example
in the next section.

The following definition of LR based classification rule
will be used throughout this paper.
\begin{definition}
\label{classrule}
Given two alternative probability laws $P_-$ or $P_+$
mutually absolutely continuous 
with respective densities $f_-$ and $f_+$,
define the likelihood ratio $L = f_+/f_-$,
its respective distribution functions $H_-$ and $H_+$
and the following classification rule.
For each $0<t<1$:
\begin{enumerate}
\item if $L > q_t$, declare positive;
\item if $L < q_t$, declare negative;
\item if $L = q_t$, then perform an auxiliary
independent randomization and 
declare positive with probability 
\[
r(t) = \frac{H_-(q_t)-t}{H_-(q_t) - H_-(q_t^-)}
\]
and negative otherwise.
\end{enumerate}

%
%

\end{definition}
%
This definition
parallels the definition of a randomized LR test\cite{L1986}
, but it is presented here 
in a classification context.
\begin{theorem}
\label{mainthm}
The ROC function of the classification rule of
Definition~\ref{classrule} is
\begin{equation}
\label{ROCaltrule}
{\rm{ROC}}(x) = 1 - H_+(q_{1-x}) + q_{1-x}
(H_-(q_{1-x})-(1-x)),
\quad 0<x<1.
\end{equation}
As usual, we can complete the result by setting
$\text{ROC}(0) = 0$ and 
$\text{ROC}(1) = 1$. 
\end{theorem}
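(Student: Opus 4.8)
The plan is to compute the false positive rate $\mathrm{FPR}(t)$ and the true positive rate $\mathrm{TPR}(t)$ of the rule of Definition~\ref{classrule} directly as functions of the parameter $t\in(0,1)$, and then to eliminate $t$. For $t\in(0,1)$ the quantile $q_t$ is a finite real number and, by the characterization recalled after \eqref{quantile} together with right-continuity of $H_-$, one has $H_-(q_t^-)\le t\le H_-(q_t)$. Both rates are expectations of the same randomized decision function $\phi_t$, which equals $1$ on $\{L>q_t\}$, equals $r(t)$ on $\{L=q_t\}$, and equals $0$ on $\{L<q_t\}$: precisely, $\mathrm{FPR}(t)=E_-(\phi_t)$ and $\mathrm{TPR}(t)=E_+(\phi_t)$.

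First I would evaluate $\mathrm{FPR}(t)=P_-(L>q_t)+r(t)\,P_-(L=q_t)=\bigl(1-H_-(q_t)\bigr)+r(t)\bigl(H_-(q_t)-H_-(q_t^-)\bigr)$. Inserting the definition of $r(t)$, the factor $H_-(q_t)-H_-(q_t^-)$ cancels and the jump term collapses to $H_-(q_t)-t$, so $\mathrm{FPR}(t)=1-t$. (When $H_-$ has no jump at $q_t$, $r(t)$ is a formal $0/0$; but then $P_-(L=q_t)=0$ and $t=H_-(q_t)$, so the same identity holds, and this degenerate case can be dismissed in one line.) Hence $t\mapsto\mathrm{FPR}(t)=1-t$ is a bijection of $(0,1)$ onto itself, the ROC locus is genuinely the graph of a function, and writing $x=1-t$ we have $\mathrm{ROC}(x)=\mathrm{TPR}(1-x)$.

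The crux is $\mathrm{TPR}(t)=P_+(L>q_t)+r(t)\,P_+(L=q_t)$. The first term is $1-H_+(q_t)$. For the second, the key identity is that, since $dP_+=L\,dP_-$ (as $f_-$ may be taken strictly positive), and since on the event $\{L=q_t\}$ the variable $L$ is the constant $q_t$, we get $P_+(L=q_t)=E_-\bigl(L\,\mathbf{1}\{L=q_t\}\bigr)=q_t\,P_-(L=q_t)=q_t\bigl(H_-(q_t)-H_-(q_t^-)\bigr)$. Substituting this and the definition of $r(t)$, the jump factor cancels a second time, leaving $\mathrm{TPR}(t)=1-H_+(q_t)+q_t\bigl(H_-(q_t)-t\bigr)$. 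Replacing $t$ by $1-x$ gives exactly \eqref{ROCaltrule}.

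Finally, the boundary values follow by letting $t\uparrow1$ and $t\downarrow0$ and recalling that $q_t$ tends to the essential supremum, respectively the essential infimum, of $L$ (which coincide under $P_-$ and $P_+$): the decision function $\phi_t$ tends $P_+$-almost surely to $0$ in the first case and to $1$ in the second, so dominated convergence yields $\mathrm{TPR}(t)\to0$ and $\mathrm{TPR}(t)\to1$, justifying $\mathrm{ROC}(0)=0$ and $\mathrm{ROC}(1)=1$. The only step that is not pure bookkeeping is keeping $H_-(q_t)$, $H_-(q_t^-)$ and $H_+(q_t)$ distinct at an atom of $L$ and using the change of measure $P_+(L=q_t)=q_t\,P_-(L=q_t)$ to express the $H_+$-atom through the $H_-$-atom, so that it meshes with the weight $r(t)$; once that is in place, the two telescoping cancellations make \eqref{ROCaltrule} fall out.
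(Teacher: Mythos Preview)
Your proof is correct and follows essentially the same route as the paper: compute $\mathrm{FPR}(t)=1-t$ by inserting the definition of $r(t)$, compute $\mathrm{TPR}(t)$ via the identity $P_+(L=q_t)=q_t\,P_-(L=q_t)$ (which the paper phrases equivalently as $(H_+(q_t)-H_+(q_t^-))/(H_-(q_t)-H_-(q_t^-))=q_t$), and then reparametrize by $x=1-t$. Your treatment of the degenerate no-atom case and of the boundary values is slightly more explicit than the paper's, but the argument is the same.
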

\begin{proof}
We first compute separately the FPR and the TPR.
\begin{align*}
\text{FPR} &= P_-( \text{declare positive} ) \\
           &= P_-( L > q_t) + P_-( L = q_t) r(t) \\
           &= 1 - H_-(q_t) +  (H_-(q_{t}) - H_-(q_{t}^-)) r(t)\hphantom{\frac{H_-(q_t)-t}{H_-(q_t) - H_-(q_t^-)}}\\
           &= 1 - H_-(q_t) + H_-(q_t) -t \\
           &= 1-t.
\end{align*}
Notice that if $t=H_-(q_t)$ then $H_-(q_t^-)- H_-(q_t)=0$;
in other words the expression simplifies
for points which are not $H_-$-atoms.
\begin{align*}
\text{TPR} &= P_+( \text{declare positive} ) \\
           &= P_+( L > q_t) + P_+( L = q_t) r(t) \\
           &= 1 - H_+(q_t)+ (H_+(q_t) - H_+(q_t^-)) \frac{H_-(q_t)-t}{H_-(q_t) - H_-(q_t^-)} \\
           &= 1 - H_+(q_t) + q_t ( H_-(q_t)-t)
\end{align*}
since, $P_+$ and $P_-$ being mutually absolutely continuous,
they will both have or not have an atom in $q_t$
and their LR in $q_t$ will be exactly
$(H_+(q_t) - H_+(q_t^-))/(H_-(q_t) - H_-(q_t^-))$,
i.e. $q_t$ itself.
Next, set $\text{FPR}=x$, i.e. $t=1-x$, to eliminate the parameter $t$ and obtain the 
explicit form of the ROC curve:
\begin{align*}
\text{TPR} &= 1 - H_+(q_{1-x}) + q_{1-x} ( H_-(q_{1-x})-(1-x)).
\end{align*}
\end{proof}

\section{Relationship with a general concentration function}
\label{relationship}
Expression~(\ref{ROCaltrule}) does not come out of nowhere.  It
corresponds to a definition of concentration function given by
Cifarelli\cite{CR1987}, and further expanded by Regazzini\cite{R1992}, with the aim of
extending the classical definition of concentration given by Gini.
Such a definition is naturally based on the LR, and given
the strict relationship existing between ROC curves and LRs, 
the connection comes easily.

We recall the definition of concentration
\cite{CR1987}
for the case $P_+$ and $P_-$ are mutually absolutely continuous:
\begin{definition}
Let $P_+$ and $P_-$ be mutually absolutely continuous 
probability measures, let $f_+$ and $f_-$ be their 
respective derivatives with respect to a common dominating measure
$\mu$, let their LR be defined 
as the real-valued random variable $L=f_+/f_-$,
let $H_-$ be its distribution function under
$P_-$ and let $q_x$ be its quantile function. 
Then Cifarelli\cite{CR1987} defines the concentration function of
 $P_+$ with respect to $P_-$ as $\varphi(0)=0$,
$\varphi(1)=1$ and
$$
\varphi(x) = P_+(L<q_x) + q_x (x-H_-(q_x^-)).
$$
\end{definition}
The connection between this definition and the classification rule 
of the previous section is established in the next Theorem.
\begin{theorem}
\label{Teo3.1}
Under the hypotheses described in Definition~\ref{classrule},
$$
{\rm{ROC}}(x) = 1 - \varphi(1-x) \quad \forall 0 \leq x \leq 1.
$$
where $\varphi(\cdot)$ is the concentration function of $P_+$
with respect to $P_-$.
\end{theorem}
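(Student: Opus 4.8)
The plan is to obtain the identity directly from Theorem~\ref{mainthm} by a short algebraic manipulation, using only the definition of $\varphi$ and the atom-matching fact already established inside the proof of Theorem~\ref{mainthm}. First I would restate the target: writing $y = 1-x$, the assertion $\text{ROC}(x) = 1 - \varphi(1-x)$ combined with expression~(\ref{ROCaltrule}) is equivalent to
$$\varphi(y) = H_+(q_y) - q_y(H_-(q_y) - y), \quad 0 < y < 1,$$
so it is enough to verify that this formula for $\varphi(y)$ coincides with Cifarelli's definition $\varphi(y) = P_+(L < q_y) + q_y(y - H_-(q_y^-))$.

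Next I would rewrite $P_+(L < q_y)$ as $H_+(q_y^-)$, the left limit of the distribution function $H_+$ at the point $q_y$. Substituting this into the definition of $\varphi$ and cancelling the term $q_y y$ that appears on both sides, the desired equality collapses to
$$H_+(q_y^-) - q_y H_-(q_y^-) = H_+(q_y) - q_y H_-(q_y),$$
that is, to
$$H_+(q_y) - H_+(q_y^-) = q_y (H_-(q_y) - H_-(q_y^-)).$$

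This last relation is exactly the one used in the TPR computation in the proof of Theorem~\ref{mainthm}: since $P_+$ and $P_-$ are mutually absolutely continuous, they have an atom at $q_y$ simultaneously or not at all, and when they do the ratio of the jump of $H_+$ to the jump of $H_-$ at $q_y$ is the value of $L$ there, namely $q_y$ itself (note $q_y > 0$ under $P_-$, so no division issue arises). I would simply invoke that step rather than repeat it. It then remains to handle the endpoints: for $x = 0$ and $x = 1$ the claim reads $1 - \varphi(1) = 0$ and $1 - \varphi(0) = 1$, which hold by the conventions $\varphi(0) = 0$, $\varphi(1) = 1$ and $\text{ROC}(0) = 0$, $\text{ROC}(1) = 1$.

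I do not foresee a real obstacle. The only thing demanding attention is careful bookkeeping of left limits versus values of $H_-$ and $H_+$; once $P_+(L < q_y)$ is correctly identified with $H_+(q_y^-)$, everything reduces to the single nontrivial ingredient — the jump ratio equals $q_y$ — which is inherited verbatim from the preceding theorem.
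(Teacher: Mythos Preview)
Your proposal is correct and follows essentially the same approach as the paper: both start from formula~(\ref{ROCaltrule}), compare with Cifarelli's definition of $\varphi$, and reduce the verification to the single identity $H_+(q_y)-H_+(q_y^-)=q_y\bigl(H_-(q_y)-H_-(q_y^-)\bigr)$, which is exactly the jump-ratio fact borrowed from the proof of Theorem~\ref{mainthm}. The only cosmetic difference is that the paper adds and subtracts $H_+(q_x^-)$ and $q_x H_-(q_x^-)$ to make the cancellation explicit, whereas you reach the same cancellation by equating the two expressions and simplifying.
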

\begin{proof}
The equivalent relationship
$$
1- {\rm{ROC}}(1-x) =\varphi(x) \quad \forall 0 \leq x \leq 1.
$$
can be verified directly for $x=0,1$ and as follows 
for $0<x<1$:
\begin{align*}
1- {\rm{ROC}}(1-x) &=  H_+(q_x) - q_x( H_-(q_x) - x) \\
&= H_+(q_x) \pm H_+(q_x^-) + q_x( x - H_-(q_x) \pm H_-(q_x^-)) \\
&= H_+(q_x^-) + q_x(x-H_-(q_x^-)) + \\
& \quad\quad\quad (H_+(q_x) - H_+(q_x^-))-q_x( H_-(q_x)- H_-(q_x^-)) \\
&= H_+(q_x^-) + q_x( x - H_-(q_x^-)) + \\
& \quad\quad\quad (H_-(q_x)- H_-(q_x^-))\left(\frac{H_+(q_x) - H_+(q_x^-)}{
    H_-(q_x)- H_-(q_x^-)}-q_x\right) \\
&= P_+(L<q_x) + q_x (x-H_-(q_x^-)) \\
&= \varphi(x).
\end{align*}
\end{proof}
\begin{corollary}
\label{properness}
Under the hypotheses described in Definition~\ref{classrule},
 ${\rm{ROC}}(\cdot)$ is a nondecreasing, continuous and concave function on
 $[0,1]$. In particular, ${\rm{ROC}}(\cdot)$ is proper.
\end{corollary}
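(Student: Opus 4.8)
The plan is to route everything through Theorem~\ref{Teo3.1} and reduce the statement to three analytic properties of the concentration function $\varphi$. By that theorem ${\rm ROC}(x)=1-\varphi(1-x)$ for $0\le x\le 1$, and $x\mapsto 1-x$ is an affine, strictly decreasing self-map of $[0,1]$. So it suffices to show that $\varphi$ is nondecreasing, continuous and convex on $[0,1]$: composing with $x\mapsto 1-x$ turns these into nonincreasing, continuous and convex respectively, whence ${\rm ROC}=1-\varphi(1-\cdot)$ is nondecreasing, continuous and concave. Properness then comes for free, since a concave function on $[0,1]$ with values $0$ at $0$ and $1$ at $1$ lies above the chord joining $(0,0)$ to $(1,1)$, i.e. above the $45$-degree line, so ${\rm ROC}(x)\ge x$.

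The substantive step is the integral representation
\[
\varphi(x)=\int_0^x q_s\,ds,\qquad 0<x<1,
\]
where $q_s$ is the quantile function of $L$ under $P_-$ from~(\ref{quantile}). I would derive it from three facts already available: the change of measure $P_+(L<l)=\int_{[0,l)}u\,dH_-(u)$ (because $dP_+=L\,dP_-$), the duality $q_s>l\iff H_-(l)<s$, and Fubini's theorem, which first gives $\int_0^x q_s\,ds=\int_0^\infty (x-H_-(l))^+\,dl=\int_0^{q_x}(x-H_-(l))\,dl$, and then, after an integration by parts, $q_x\,(x-H_-(q_x^-))+\int_{[0,q_x)}u\,dH_-(u)=q_x\,(x-H_-(q_x^-))+P_+(L<q_x)=\varphi(x)$. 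This is the part I expect to be the main obstacle, precisely because it needs the same careful bookkeeping of left limits $H_-(q_x^-)$ and of the atoms shared by $H_-$ and $H_+$ that already showed up in the proof of Theorem~\ref{Teo3.1}. If one prefers a shortcut, one may instead invoke the known analytic properties of the Cifarelli--Regazzini concentration function \cite{CR1987, R1992} directly.

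Granting the representation, the three properties follow from the fact that $q_s$ is nondecreasing (it is a quantile function), nonnegative (since $L=f_+/f_-\ge 0$), and integrable on $(0,1)$, with $\int_0^1 q_s\,ds=\int L\,dP_-=\int f_+\,d\mu=1$. Nonnegativity of the integrand makes $x\mapsto\int_0^x q_s\,ds$ nondecreasing; integrability makes it absolutely continuous, hence continuous, on $[0,1]$, with the correct boundary values $\varphi(0)=0$, $\varphi(1)=1$; and monotonicity of $q_s$ gives convexity, since for $0<a<c<b<1$
\[
\frac{\varphi(c)-\varphi(a)}{c-a}=\frac{1}{c-a}\int_a^c q_s\,ds\le q_c\le\frac{1}{b-c}\int_c^b q_s\,ds=\frac{\varphi(b)-\varphi(c)}{b-c},
\]
which is the ``nondecreasing incremental ratios'' characterisation of convexity on $(0,1)$, extended to $[0,1]$ by continuity at the endpoints. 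Translating back through ${\rm ROC}(x)=1-\varphi(1-x)$ then yields that ${\rm ROC}$ is nondecreasing, continuous, concave and above the diagonal, i.e. proper.
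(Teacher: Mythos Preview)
Your proof is correct and follows the same overall route as the paper: reduce via Theorem~\ref{Teo3.1} to the analytic properties of $\varphi$, then translate nondecreasingness, continuity and convexity of $\varphi$ into the corresponding properties of ${\rm ROC}$ through the affine change $x\mapsto 1-x$. The difference lies in how the properties of $\varphi$ are obtained. The paper simply invokes Theorem~2.3 of Cifarelli and Regazzini\cite{CR1987} and only spells out the convexity-to-concavity algebra. You instead establish the integral representation $\varphi(x)=\int_0^x q_s\,ds$ from first principles (change of measure $dP_+=L\,dP_-$, the quantile--cdf duality, Fubini, and an integration by parts that correctly tracks the left limits $H_-(q_x^-)$), and then read off all three properties directly from nonnegativity, integrability and monotonicity of $s\mapsto q_s$. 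This buys you a fully self-contained argument that does not depend on an external reference, and as a bonus it makes visible the classical fact that the slope of the proper ROC curve at $x$ is precisely the LR threshold $q_{1-x}$. The paper's version is shorter but outsources the analytic core; yours is longer but more informative, and your explicit mention of the Cifarelli--Regazzini shortcut shows you are aware of the trade-off.
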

\begin{proof}
  This is a consequence of Theorem 2.3 in Cifarelli\cite{CR1987}.  In
  particular, $\varphi(x)$ is always convex over its domain,
  i.e. $\forall x_1, x_2$ and $\nu \in [0, 1], \varphi(\nu
  x_1+(1-\nu)x_2) \leq \nu\varphi(x_1)+(1-\nu)\varphi(x_2)$. By Theorem
  \ref{Teo3.1}:
\[
1-{\rm{ROC}}(1-(\nu x_1+(1-\nu)x_2)) \leq \nu(1-{\rm{ROC}}(1-x_1)) +(1-\nu)(1-\rm{ROC}(1-x_2)).
\]
The left hand side of the previous equality becomes:
\begin{align*}
1-{\rm{ROC}}(1-(\nu x_1+(1-\nu)x_2)) &= 1 - {\rm{ROC}}(\nu + (1-\nu) -\nu x_1 - (1-\nu)x_2)\\
							  &= 1 - {\rm{ROC}}(\nu(1-x_1)+(1-\nu)(1-x_2)),
\end{align*}
while the right hand side can be rewritten as:
\begin{align*} 
& \nu(1-{\rm{ROC}}(1-x_1)) +(1-\nu)(1-{\rm{ROC}}(1-x_2)) = \\
& \nu-\nu {\rm{ROC}}(1-x_1)+ 1-\nu -(1-\nu){\rm{ROC}}(1-x_2) =\\
& 1 - \nu {\rm{ROC}}(1-x_1) - (1-\nu){\rm{ROC}}(1-x_2).
\end{align*}
Therefore:
\[
{\rm{ROC}}(\nu t_1+(1-\nu)t_2) \geq \nu {\rm{ROC}}(t_1) + (1-\nu){\rm{ROC}}(t_2), \quad \forall t_1,t_2,\nu \in [0,1] \]
where $t_1=1-x_1, t_2=1-x_2$.
\end{proof}

As mentioned in the Introduction, we would like to stress that a proper ROC curve is possible under the very general assumption that 
a LR is meaningful. Instead, in the applied literature,
the existence of a proper ROC curve is often believed to be limited
to models with a monotone likelihood ratio on a certain score.

Finally, we conclude by stating a precise relationship between ROC curves
and the Lorenz-Gini curve. Such a relationship 
belongs to the folklore on ROC curves, since their affinity is apparent,
but it has been seldom clearly stated due to
the persistance of improper ROC curves in current applications.  
Now that we have shown that a proper curve can be constructed,
we are able to make a clear statement, building again on 
results in Cifarelli\cite{CR1987}.
\begin{corollary}
\label{Gini}
 If $P_-$ is a probability measure on the positive real line
with distribution function $F_-$ and finite mean $m=\int t P_-(dt)$
and if $P_+$ has distribution function
$$
F_+(y) = \frac{\int_{[0,y]} t P_-(dt)}{m}, \quad y \geq 0,
$$
then   
$$
1- {\rm{ROC}}(1-x) =\lambda(x) \quad \forall 0 \leq x \leq 1.
$$
where $\lambda(\cdot)$ is the Lorenz-Gini curve.
\end{corollary}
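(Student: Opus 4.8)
The plan is to recognize $P_+$ as the size-biased (first-moment) distribution of $P_-$, read off the likelihood ratio explicitly, and then identify the concentration function $\varphi$ of $P_+$ with respect to $P_-$ with the Lorenz-Gini curve $\lambda$; the statement then follows at once from Theorem~\ref{Teo3.1}. First I would observe that the prescribed form of $F_+$ says precisely that $P_+$ has $P_-$-density $t\mapsto t/m$; in particular $P_+$ and $P_-$ are mutually absolutely continuous as soon as $P_-(\{0\})=0$ (which is forced if a likelihood ratio, hence the ROC curve, is to be well defined) and $m<\infty$ (assumed). Writing $X$ for the underlying observation, with law $P_-$ or $P_+$, the likelihood ratio is therefore the nonnegative random variable $L=X/m$.

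From $L=X/m$ I would compute the ingredients appearing in the definition of $\varphi$. Denoting by $Q_-$ the quantile function of $F_-$, one gets $H_-(l)=P_-(L\le l)=F_-(lm)$, hence $q_x=Q_-(x)/m$ and $H_-(q_x^-)=F_-(Q_-(x)^-)=P_-(X<Q_-(x))$; moreover $P_+(L<q_x)=P_+(X<Q_-(x))=F_+(Q_-(x)^-)=\frac{1}{m}\int_{[0,Q_-(x))}t\,P_-(dt)$. Substituting into $\varphi(x)=P_+(L<q_x)+q_x\,(x-H_-(q_x^-))$ yields
$$
\varphi(x)=\frac{1}{m}\left[\int_{[0,Q_-(x))}t\,P_-(dt)+Q_-(x)\bigl(x-P_-(X<Q_-(x))\bigr)\right].
$$
This has to be matched with the classical representation of the Lorenz-Gini curve, $\lambda(x)=\frac{1}{m}\int_0^x Q_-(u)\,du$. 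The required identity is the quantile-transform decomposition
$$
\int_0^x Q_-(u)\,du=\int_{[0,Q_-(x))}t\,P_-(dt)+Q_-(x)\bigl(x-P_-(X<Q_-(x))\bigr),
$$
which I would prove by splitting $[0,x]$ into $\{u:Q_-(u)<Q_-(x)\}=[0,\,P_-(X<Q_-(x))]$ and the remaining piece $(\,P_-(X<Q_-(x)),x]$, and using that $Q_-$ pushes Lebesgue measure on $[0,1]$ forward onto $P_-$: the first piece contributes $\int_{[0,Q_-(x))}t\,P_-(dt)$ and the second contributes $Q_-(x)\bigl(x-P_-(X<Q_-(x))\bigr)$. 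Hence $\varphi=\lambda$ on $(0,1)$, while $x=0,1$ are immediate since $\varphi(0)=\lambda(0)=0$ and $\varphi(1)=\lambda(1)=1$. Combining with Theorem~\ref{Teo3.1}, which gives $1-{\rm{ROC}}(1-x)=\varphi(x)$, finishes the proof.

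The delicate point is the bookkeeping in the quantile-transform identity when $F_-$ has atoms (so $Q_-$ has flat stretches) or flat stretches (so $Q_-$ has jumps): one has to be scrupulous about strict versus non-strict inequalities and left limits, exactly as in the proof of Theorem~\ref{Teo3.1}. Alternatively, since Cifarelli's\cite{CR1987} concentration function was devised precisely to subsume Gini's, one may instead simply invoke that $\varphi$ collapses to the ordinary Lorenz curve in this first-moment case, so that the proof reduces to the identification $L=X/m$ together with a reference to Cifarelli\cite{CR1987}.
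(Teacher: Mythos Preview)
Your proposal is correct. The paper's own proof is exactly your closing ``alternative'': it simply says this is a consequence of Theorem~2.4 in Cifarelli\cite{CR1987} (which establishes that $\varphi$ reduces to the Lorenz--Gini curve when $P_+$ is the first-moment distribution of $P_-$), and then Theorem~\ref{Teo3.1} does the rest. Your primary route---identifying $L=X/m$, computing $\varphi$ explicitly, and matching it to $\lambda(x)=\tfrac1m\int_0^x Q_-(u)\,du$ via the quantile-transform identity---is a self-contained unpacking of what Cifarelli's Theorem~2.4 asserts, so it buys independence from the external reference at the cost of the atom/flat-stretch bookkeeping you flag. Either way the argument is sound.
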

This is a consequence of Theorem 2.4 in Cifarelli\cite{CR1987},
where further details on the Lorenz-Gini curve can be found.
In particular, in the economic applications where the Lorenz-Gini
scheme is usually employed, $F_+(y)$ is the fraction of total income
owned by the poorest fraction $F_-(y)$ of the population.
Finally notice that, in other contexts, under the assumptions of the
Corollary \ref{Gini} $P_+$ is also called a length-biased
version of $P_-$. 


\section{Examples}
\label{examples}
\subsection{Two absolutely continuous measures
  with discrete LR}
\label{threerectangles}
Let $P_-$ be an absolutely continuous probability measure on the real
line with density $f_-$ uniform between 0 and 3 and let $P_+$ have  
a piecewise constant density $f_+$ defined as follows:
$$
f_+(s) = \frac1{18} (0 < s \leq 1) +
\frac{10}{18} (1 < s \leq 2) +
\frac7{18} (2 < s \leq 3) =
    \begin{cases}
      \frac{1}{18}  \text{\; if } 0 < s \leq 1 \\
 \frac{10}{18}  \text{\; if } 1 < s \leq 2 \\
 \frac{7}{18} \text{\; if } 2 < s \leq 3 \\
 0 \text{\; \: otherwise}
    \end{cases}
$$
where we write $(A)$ as an indicator function for the event A,
i.e. the function which equals 1 if A is true and 0 otherwise. 
Suppose $S$ is a real random variable with density $f_-$ under $P_-$
and $f_+$ under $P_+$.
It is easy to see that the LR $L=f_+/f-$ is 
piecewise constant and not monotone in $S$, being:
$$
L = \begin{cases}
\frac{1}{6} & \text{ if } 0 < s \leq 1 \\
\frac{10}{6} & \text{ if } 1 < s \leq 2 \\
\frac{7}{6}  & \text{ if } 2 < s \leq 3. 
\end{cases}
$$ 
A classification rule based only on $S$ gives rise to a ROC curve
$$
{\rm{ROC}}_S(x) = \begin{cases}
\frac{21}{18} x & \text{ if \; \,} 0 \leq x < 1/3 \\
- \frac{3}{18} + \frac{30}{18} x & \text{ if } 1/3 \leq x < 2/3 \\
\frac{15}{18} + \frac{3}{18} x & \text{ if } 2/3 \leq x < 1 
\end{cases}
$$ 
which is not concave, shown as dashed line in Figure 
\ref{Fig_ex_rectangles_convex}.
%
\noindent Using instead the LR based classification rule, 
the ROC curve is:
$$
{\rm{ROC}}_L(x) =
\begin{cases}
\frac{30}{18}x & \text{ if \; \,} 0 \leq x < 1/3 \\
\frac{3}{18}+\frac{21}{18}x & \text{ if } 1/3 \leq x < 2/3 \\
\frac{15}{18}+\frac{3}{18}x & \text{ if } 2/3 \leq x < 1 \\
\end{cases}
$$
which is concave and dominates the previous one 
as shown in Figure \ref{Fig_ex_rectangles_convex}.
\begin{figure}
	\centering
	\includegraphics[width=8cm]{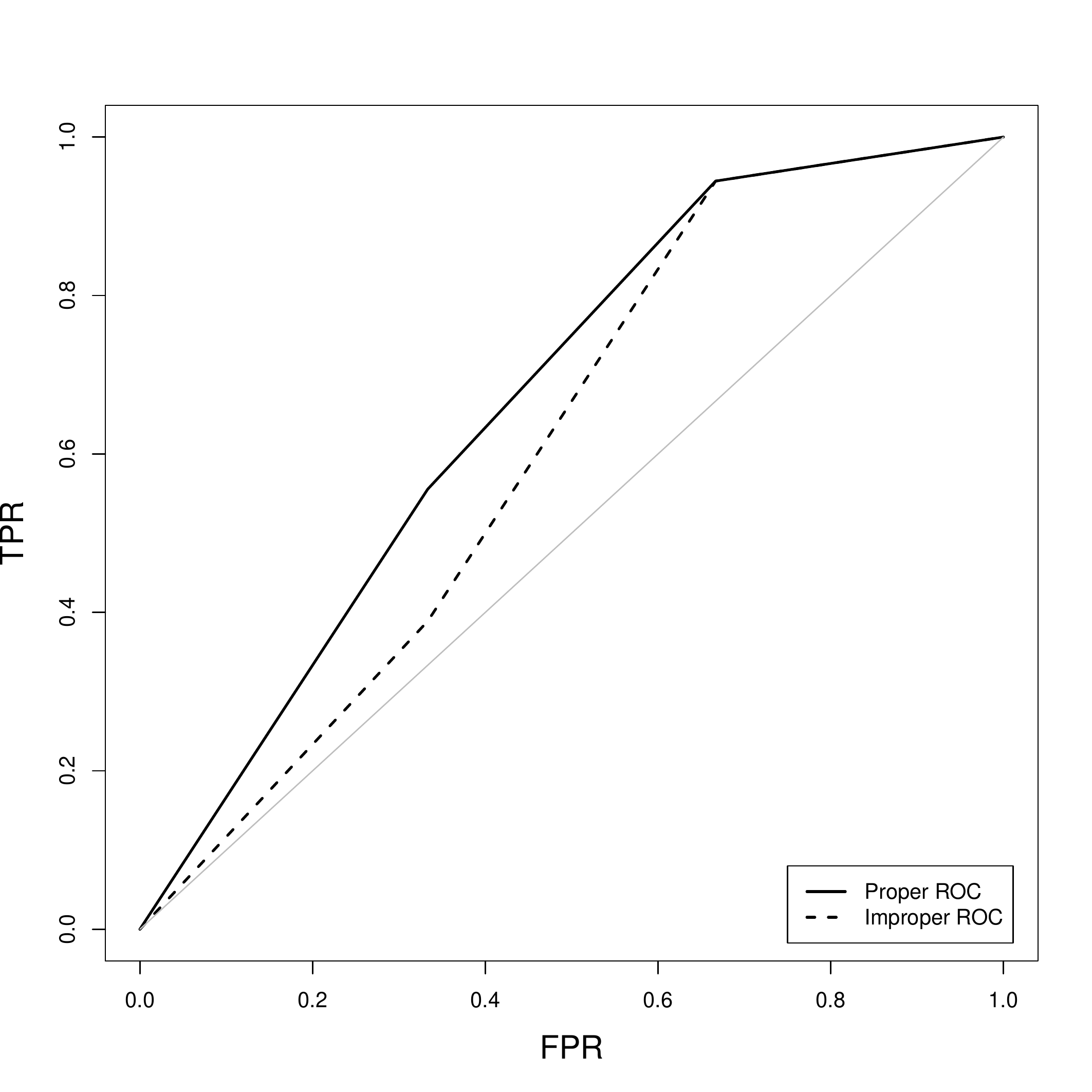}
	\caption{Proper ROC curve based on the LR of S (solid line); improper ROC curve based on S (dashed line).}\label{Fig_ex_rectangles_convex}
\end{figure}
This example deals with absolutely continuous densities which, nonetheless,
have a finite discrete likehood ratio.
As mentioned in the Introduction, this case is particularly
difficult for  the usual approaches to ROC curves, which emphasize
a continuous score is necessary.

\subsection{Two finite measures}
\label{finite}
The following example is taken from the Encyclopedia of Biostatistics
\cite{AC2005}.
Suppose  $109$ patients have been classified as diseased ($D+$) or not
diseased ($D-$), based on a gold standard such as biopsy or autopsy.
On the basis of radiological exams, they have also been classified  
over five ordinal levels
  \begin{align*}
-- &= \text{very mild} \\
- &= \text{mild} \\
+- &= \text{neutral} \\
+ &= \text{serious} \\
++ &= \text{very serious}
      \end{align*}
Here are the results:
\begin{center}
  \begin{tabular}[c]{rccccc|r}
   & --\ -- &-- & +-- & + & ++ & total \\
\hline
D- & 33& 6 & 6 & 11 & 2 & 58 \\
D+ & 3 & 2 & 2 & 11 & 33 & 51 \\
\hline
\end{tabular}
\end{center}
\vspace{.5cm}
Define as $P_+$ and $P_-$ the two empirical measures,
relative to the diseased
and not diseased population respectively, 
derived from data.
There are four possible values for the LR:
$$
L = \begin{cases}
\frac{58}{561} & \text{if } -- \\
\frac{58}{153} & \text{if } - \text{ or } +- \\
\frac{58}{51} & \text{if } + \\
\frac{319}{17} & \text{if \;} ++ \\
\end{cases}
$$
which give rise
to four empirical ROC points $\{$
(25/58, 48/51); (19/58, 46/51); (13/58,
44/51); (2/58, 33/51)$\}$, shown in Figure \ref{Fig_Example1}.
Now we can see that, thanks to the randomization device, 
we can ... connect the dots!
This is so since the distribution functions of $L$ under $P_-$ and $P_+$ are 
$$
H_-(l) = \begin{cases}
0 & \text{if \;} 0 \leq l < \frac{58}{561} \\
\frac{33}{58} & \text{if \;} \frac{58}{561} \leq l < \frac{58}{153} \\
\frac{45}{58} & \text{if \;} \frac{58}{153} \leq l < \frac{58}{51} \\
\frac{56}{58} & \text{if \;} \frac{58}{51} \leq l < \frac{319}{17} \\
1 & \text{if \;} \frac{319}{17} \leq l\\
\end{cases}
$$
and
$$
H_+(l) = \begin{cases}
0 & \text{if \;} 0 \leq l < \frac{58}{561} \\
\frac{3}{51} & \text{if \;} \frac{58}{561} \leq l < \frac{58}{153} \\
\frac{7}{51} & \text{if \;} \frac{58}{153} \leq l < \frac{58}{51}  \\
\frac{18}{51} & \text{if \;}\frac{58}{51} \leq l < \frac{319}{17} \\
1 & \text{if \;} \frac{319}{17} \leq l.\\
\end{cases}
$$

\noindent Therefore, the ROC curve can be calculated using 
equation ({\ref{ROCaltrule}):
$$
{\rm{ROC}}(x) =
\begin{cases}
\frac{319}{17}x & \text{ if \; \,}  0 \leq x < \frac{2}{58} \\
\frac{31}{51}+\frac{58}{51}x & \text{ if \; \,} \frac{2}{58} \leq x < \frac{13}{58} \\
\frac{7}{9}+\frac{58}{153}x & \text{ if \; \,} \frac{13}{58} \leq x < \frac{25}{58} \\
\frac{503}{561}+\frac{58}{561}x & \text{ if \; \,} \frac{25}{58} \leq x < 1 \\
\end{cases}
$$
The continuous ROC curve interpolates the empirical ROC points,
as shown in  Figure \ref{Fig_Example1}.
\begin{figure}\centering
	\includegraphics[width=8cm]{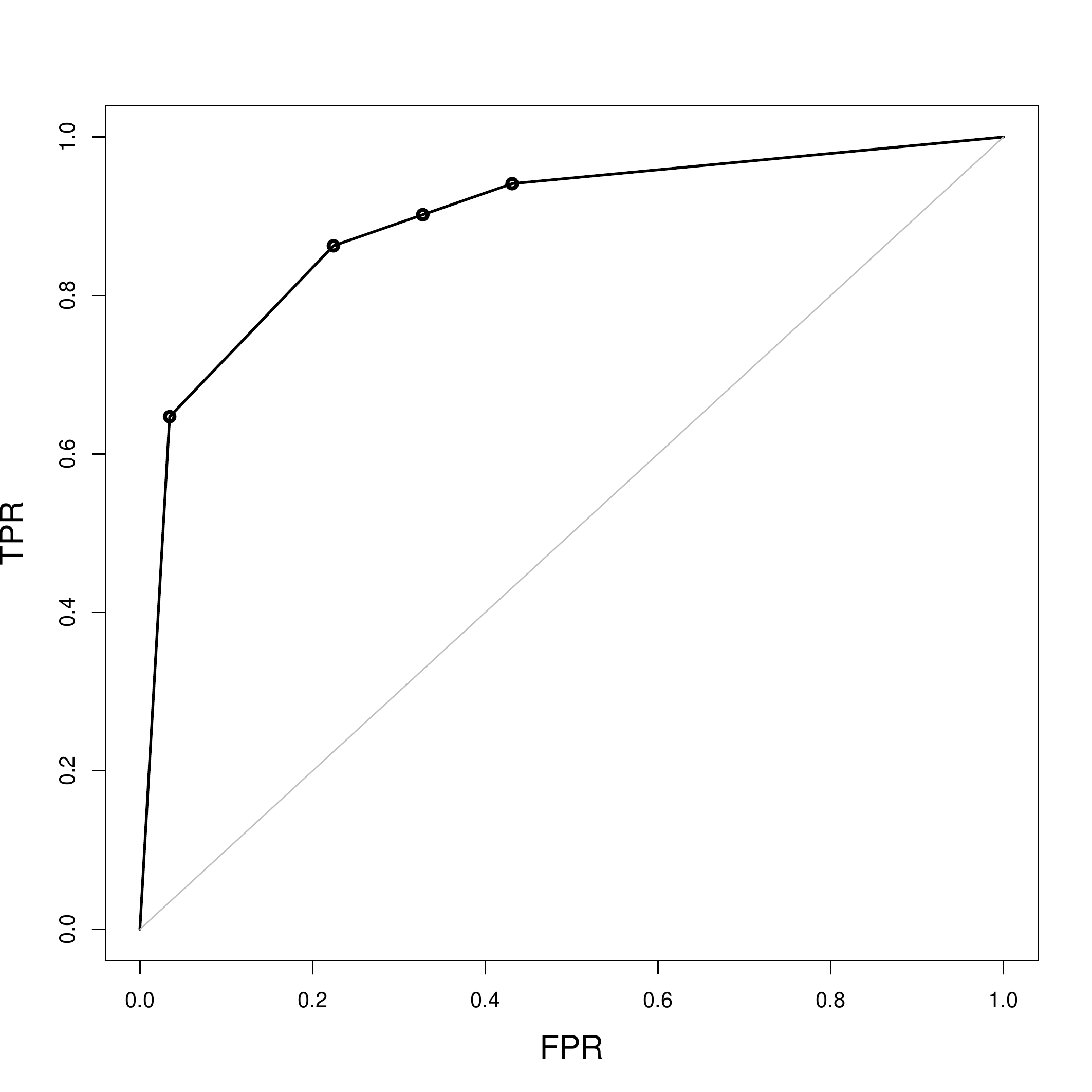}
	\caption{The proper ROC curve based on the LR
		interpolates the empirical ROC points.}
	\label{Fig_Example1}
\end{figure}

The example illustrates the crucial role played by randomization in order
to obtain a proper ROC curve in the finite case. Even more so, if we
consider that hardly any attention is ever given to the finite case in
the ROC literature.

\subsection{Two multivariate normal measures: Fisher's LDA and QDA}
\label{fisher}
Assume $P_-$ is multivariate normal with mean $\mu_-$ and variance
$\Sigma_-$ and $P_+$ is multivariate normal with mean $\mu_+$ and
variance $\Sigma_+$ and both densities exist.  By taking the
logarithmic transformation of the LR, it can easily be seen that for
the normal case the LR based classification rule in
Definition \ref{classrule} declares positive if the quadratic
score
\begin{equation}
\label{QDAscore}
(X-\mu_-)^\T \Sigma_-^{-1} (X-\mu_-) - (X-\mu_+)^\T \Sigma_+^{-1} (X-\mu_+)   
\end{equation}
is large. This is the well known Fisher's Quadratic Discriminant
Analysis (QDA) rule\cite{F1936}, which reduces to linear -- hence the corresponding
Linear Discriminant Analysis (LDA) -- in the case $\Sigma_-=\Sigma_+$
(homoschedasticity). The original work by Fisher did not actually
focus on the normality assumption, but QDA and LDA are well
established terminology in the literature.
Being based on the LR, QDA has a proper ROC curve
and it is optimal under the stated assumptions; the score 
in equation (\ref{QDAscore}) is a continuous random variable
and no randomization device is needed.

Insisting on a linear classifier leads to suboptimal procedures
in the case of heteroschedasticity.
The classifier which is optimal within the class of linear classifiers 
is considered in Su and Liu\cite{SL1993} and it declares positive if
\begin{equation}
\label{linearbestscore}
(\mu_+-\mu_-)^\T(\Sigma_-+\Sigma_+)^{-1}X
\end{equation}
is large. It gives an improper ROC curve, which always has a ``hook''
and which is dominated by the ROC curve of 
the corresponding quadratic score in Expression (\ref{QDAscore}).
It may be worth providing an example, since the optimality of the
quadratic score in the normal case is being continuously rediscovered
\cite{MP1999, H2016}, but it actually boils down to Fisher\cite{F1936}.

Consider a bivariate normal vector $(X,Y)$ which in population $P_-$
has a bivariate standard normal distribution, whereas in population
$P_+$ has independent components $X$
distributed normally with mean $\mu_x>0$ and variance $\sigma_x^2$ and
$Y$ distributed normally with mean $\mu_y>0$ and variance 
$\sigma_y^2\not=\sigma_x^2$.
According to equation~(\ref{QDAscore}),
the QDA classifier declares positive if
$$
\left(\frac{X-\mu_x}{\sigma_x}\right)^2 + \left(\frac{Y-\mu_y}{\sigma_y}\right)^2 
- X^2 - Y^2 < c 
$$
where $c$ is an arbitrary threshold. By varying $c$ and calculating
the appropriate probabilities under $P_-$ and $P_+$, we can obtain 
the ROC curve, by simulation or, if greater precision is needed,
by  using non-central chi-square distributions.
The ROC curve for the case
$\mu_x = 1$, $\mu_y = 2$, $\sigma_x = 2 $, $\sigma_y = 4$
is plotted as a solid line in Figure \ref{penG}.

The best linear classifier according to Expression~(\ref{linearbestscore}) is instead
$$
S = \frac{\mu_x}{1+\sigma_x^2}X + \frac{\mu_y}{1+\sigma_y^2}Y.
$$
$S$ has normal distributions under $P_-$ and $P_+$ 
and by a well-known result 
its ROC is
\begin{equation}\label{ROC_SL}
{\rm{ROC}}(t) = \phi(A + \phi^{-1}(t) B)
\end{equation}
where $\phi(\cdot)$ is the standard normal distribution function,
$$
A = \frac{\mu_x^2(1+\sigma_y^2)+\mu_y^2(1+\sigma_x^2)}{\sqrt{\mu_x^2\sigma_x^2(1+\sigma_y^2)^2 + \mu_y^2\sigma_y^2(1+\sigma_x^2)^2}}
$$ and 
$$
B = \frac{\sqrt{\mu_x^2(1+\sigma_y^2)^2 + \mu_y^2(1+\sigma_x^2)^2}}{\sqrt{\mu_x^2\sigma_x^2(1+\sigma_y^2)^2 + \mu_y^2\sigma_y^2(1+\sigma_x^2)^2}} \text{.}
$$
This ROC curve for the case
$\mu_x = 1$, $\mu_y = 2$, $\sigma_x = 2 $, $\sigma_y = 4$
is plotted as a dashed line in Figure \ref{penG}.
We can easily see that the QDA ROC curve 
is concave and dominates the best linear ROC curve.


\begin{figure}	\centering
	\includegraphics[width=8cm]{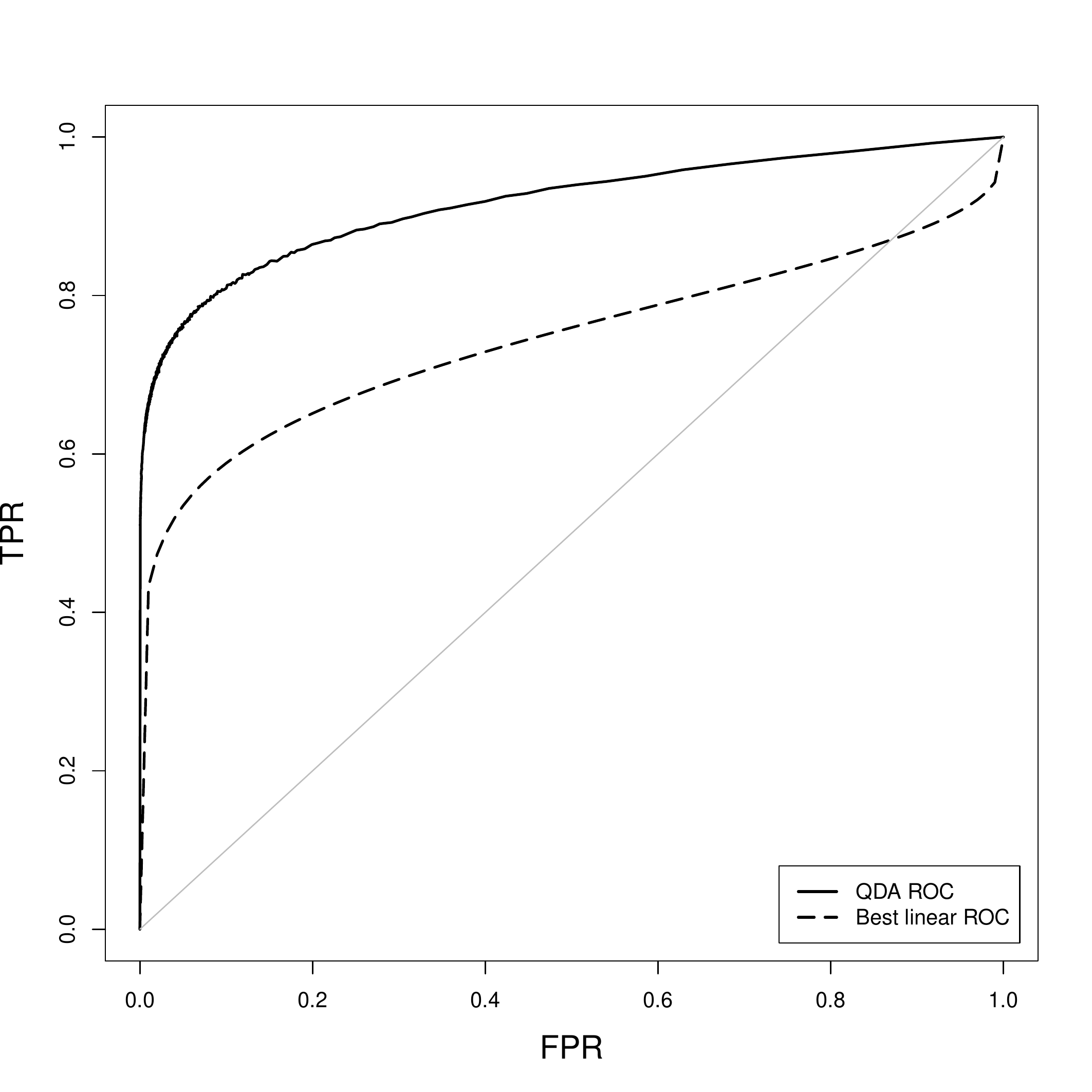}
	\caption[]{QD (solid) and best linear ROC (dashed)
		curves for the bi-bivariate normal case, assuming 
		$\mu_x = 1$, $\mu_y = 2$, $\sigma_x = 2 $, $\sigma_y = 4$.}
	\label{penG}
\end{figure}

\subsection{Two point process measures: Polya versus Poisson}
\label{polya_ex}
Suppose we can observe the times $T_1, \ldots, T_n$ of the first
consecutive $n$ failures of a repairable engine under the assumption
of instant repair. Under the further assumption of perfect repair,
after each failure the engine is restored to the original state of
perfect reliability and the failure counting process $\{N_-(\rm t), t>0\}$
is then a homogeneous Poisson process with parameter, say,
$\lambda>0$; the times $T_1, \ldots, T_n$ are partial sums of
independent and identically distributed negative exponential times.
Take then as $P_-$ their probability law.
Now recall that
$$\expect{N_-(t)}= \lambda t$$ 
and the intensity
function of the process $\{N_-(t), t>0\}$ is constant
\[
\lambda_-(t) = \lim_{\Delta t \rightarrow 0} \frac{\prob{N(t +\Delta t) - N(t) =1 \mid \mathcal{F}_{\rm t}^-}}{\Delta t} =  \lambda,
\]
where $\mathcal{F}_{\rm t}^-$ is the internal filtration of the process in
$[0,t)$.
The observable data $T_1, \ldots, T_n$ have density
\begin{equation}
  \label{poisson}
  f_-(t_1, \ldots,t_n) = \lambda^n e^{-\lambda t_n}
\end{equation}
for $0<t_1 < t_2 < \ldots < t_n$.  The data $T_1, \ldots, T_n$ provide
a non-trivial multivariate example based on which we can classify the object
(the repairable engine) as having perfect repair or
not.  Consider as alternative law $P_+$ the distribution of $T_1,
\ldots, T_n$ under the simplest self-exciting point processes, the
Polya process
\cite{B1964}, having intensity
function
$$
\lambda_+(t) = \frac{1+N_+(t^-)}{1+\lambda t} \lambda,
$$ 
where $N_+(t^-)$ is the number of failures observed in $[0,t)$.
The intensity function is scaled in such a way that
$N_+(t)$ has the same expectation function as $N_-(t)$:
$$
\expect{N_+(t)}= \lambda t.
$$ 
For such process the density
of $T_1, \ldots, T_n$ is 
\begin{equation}
  \label{polya}
  f_+(t_1, \ldots,t_n) = \frac{n! \lambda^n }{(1+\lambda t_n)^{n+1}}.
\end{equation}
The likehood ratio is then
$$
L = \frac{e^{\lambda T_n}}{(1+\lambda T_n)^{n+1}}
$$ 
and our LR based classification rule declares positive
(i.e., not perfect repair) if $L$ is large.
The classification  rule is simple since it is a function
of $T_n$ only, due to the fact that, in both expressions
(\ref{poisson}) and (\ref{polya}), $T_n$ is a sufficient statistic.
It would not be difficult to construct a more complicated example
using self-exciting processes where that is not the case.

\section{Estimating ROC curves}
\label{learning}
The examples chosen in the previous section are intentionally simple
but of growing complexity, since they are meant to illustrate some
theoretical points about ROC curves.  In particular, in all examples,
$P_-$ and $P_+$ were assumed to be known measures. As already
mentioned, the ROC curve has been treated in this context as a
parameter, a function of the underlying measures, and the examples
were meant to study how this special parameter function looks like.
If $P_-$ and $P_+$ are not known, the problem arises to estimate
them and any parameter which is a function of them
based on data, via statistics which are functions of the data.

Example~\ref{finite} blurred the difference between theoretical and
empirical measures, since $P_-$ and $P_+$ were derived from the
data. But, except for this detail, the theoretical framework was the
same as the other examples.  Finite measures are a particularly simple
case, since the empirical frequencies are often a good
estimate of the theoretical point masses at each atom and can
leasurely be substituted for them.

In practice, many of the interesting applications arise when $P_-$ and
$P_+$ are unknown multivariate probability measures of a set of $p$
continuous real covariates, or features.  $P_-$ and $P_+$ must
then be estimated, or learned, from data, which are realizations
of such covariates.  The words feature and learning belong
to the Machine Learning dictionary.  As a matter of fact, binary
classification problems of this sort constitute a great portion of the
contemporary literature at the intersection of Statistics and Machine
Learning, which nowadays is the theoretical foundation of Data
Science.  An example is the very popular textbook by Friedman et al.\cite{HTF2008},
where several chapters are devoted to finding efficient estimates of
$P_-$ and $P_+$ and of their densities.

The usual assumption
within a statistical approach is that the data from $P_-$ and $P_+$
are two random samples from the respective populations.
%
%
By far the most popular estimate of the ROC parameter is the empirical ROC curve, a plot of the empirical true positive versus
false positive sample frequencies for varying threshold $t$. 
Often, the ROC curve is actually defined as the empirical ROC curve,
to avoid reference to any underlying model
(by the same ideology, TPR is often defined as the empirical TPR,
hiding the fact that the empirical TPR changes from sample to sample).
The empirical ROC curve is also often use not as an estimate of an
underlying parameter, but as a simple descriptive tool of the performance of a classifier.
The frequencies are calculated from the set of predictions, i.e. a score for
each of the statistical units, obtained from the classifier.
Empirical ROC curves are pro-bono estimates on the grounds of the law of large
numbers, which ensures that for every fixed threshold the empirical
TPR converges to TPR and the empirical FPR converges to FPR; they uniformly converge to the theoretical curves and share good asymptotic properties\cite{Hsieh1996}. However, empirical ROC curves are generally improper: 
they easily switch from roughly concave to roughly convex
and viceversa. Actually, to speak of convexity or concavity it is not
even applicable, since they are staircase-shaped functions, which are
therefore neither convex nor concave.

It is of interest to present a LR based proper
estimated ROC curve, not staircase-shaped.
A proposal in this direction 
was made in the univariate continuous case by Zou et al.\cite{ZHS1997} and, using different estimation methods, by Gu and Pepe\cite{Gu2010}. 
In the multivariate continuous case we want to address, the problem is that
there is no universally accepted optimal estimate of the
density, in the way sample frequencies are optimal in the discrete case.
In addition to that, the computational problems for large $p$
are conspicuous.
This has led some of the same authors to express skepticism, see for example
Section 4.2.1 of Zou et al.\cite{Z2012}.
%
%
But, if LR based ROC curves are optimal and interesting parameters,
then we should not be discouraged by difficulties in estimating them.
The weaponry to do so has been overwheamly enriched in the last few
decades with contributions coming from both Statistics and Machine
Learning, as hinted above.  A minimal proposal is an estimated ROC
curve associated to a nonparametric extension of naive Bayes
estimation, a method which has proved its validity in a great deal of
applied work (see for example Section 6.6.3 of Friedman et al.\cite{HTF2008}).  Such
a nonparametric extension has been know in the Machine Learning
literature at least since a paper by John and Langley\cite{JL1995}, where it is called {\sc
  Flexible Bayes}, and we now discuss a proper estimated ROC curve for
it.


Assume two multivariate random samples $\{x_-^{ik}; i=1,\ldots n_-,
k=1, \ldots,p\}$ and $\{x_+^{ik}; i=1,\ldots n_+,k=1, \ldots,p\}$ have
been observed, where $x_-^{ik}$ (resp. $x_+^{ik}$) is the value of the
$k$-th feature previously recorded on the $i$-th object under
condition $P_-$ (resp. $P_+$).  A kernel estimate of the $k$-th
marginal density $f_s^i, k=1,\ldots,p, s \in \{-,+\}$ has the
well-known form
\begin{equation}
\label{kernel}
\hat{f}_s^k(x) = \frac1{n_s \lambda_s^k} \sum_{i=1}^{n_s} K_{\lambda}(x,x_s^{ik})
\quad -\infty < x < +\infty
\end{equation}
which, in the Gaussian case (other options exist), has
$$K_{\lambda}(x,x_s^{ik}) = \phi\left(\frac{x-x_s^{ik}}{\lambda_s^k}\right)  =
\frac1{\sqrt{2 \pi}}\exp\left\{\frac12 \left(\frac{x-x_s^{ik}}{\lambda_s^k}\right)^2\right\}.$$  
$\lambda_s^k$ is generally called the
bandwidth, and it equals the standard deviation in
the Gaussian kernel case.  Gaussian kernels are widely used in density
estimation and dedicated software exists; we will use some default
proposals for bandwidth selection.

A LR based Gaussian kernel flexible Bayes classifier is a nonparametric
classification rule which assigns a new object $\bX=(X_1, \ldots, X_p)$ 
to $P_+$, given a fixed threshold $t$, if
\begin{equation}
\label{naivebayes}
\hat{L} = \prod_{k=1}^p \frac{\hat{f}_+^k(X_k)}{\hat{f}_-^k(X_k)} > t.
\end{equation}
Notice that this is a LR based classification rule, which assumes
as $P_-$ (resp. $P_+$) the product measure with density $\prod_{k=1}^p
\hat{f}_-^k$ (resp. $\prod_{k=1}^p \hat{f}_+^k$).
The implied independence  of the features is an often unrealistic 
but parsimonious assumption.

Let $\hat{H}_-$ and $\hat{H}_+$ be the distributions of $\hat{L}$ induced by 
$\prod_{k=1}^p \hat{f}_-^k$ and $\prod_{k=1}^p \hat{f}_+^k$
respectively and let $\hat{q}$ be the quantile function of
$\hat{H}_-$. 
The ROC curve associated with
rule~(\ref{naivebayes}) is
\begin{equation*}
\widehat{\rm{ROC}}(x) = 1 - \hat{H}_+(\hat{q}_{1-x}) + \hat{q}_{1-x}
(\hat{H}_-(\hat{q}_{1-x})-(1-x)),
\quad 0<x<1
\end{equation*}
which, for Gaussian kernels, reduces to
\begin{equation}
\label{estimatedROC}
\widehat{\rm{ROC}}(x) = 1 - \hat{H}_+(\hat{q}_{1-x}) \quad 0<x<1,
\end{equation}
since the LR does not have atoms, almost surely.  Notice ROC
curve~(\ref{estimatedROC}) is proper by
Corollary~\ref{properness}.  In addition, the particular shape of the
kernel estimate~(\ref{kernel}) lends itself to a simple simulation
procedure which allows for reliable Monte Carlo calculation of
formula~(\ref{estimatedROC}).  

The key of the simulation algorithm is that equation~(\ref{kernel}) is
formally the density of a mixture of kernel distributions and it is
therefore easy to simulate from it by choosing at random (i.e. with
equal probabilities) one among the $n_s$ random variables with
densities centered at the observations $x_s^{ik}, i=1,\ldots, n_s, s
\in \{-,+\}$.  We have therefore the following algorithm, which is
stated for the Gaussian kernel Flexible Bayes case, but generalizes easily to other
options.
\begin{algorithm}
	\label{algoestimatedROC}
	To draw the graph of curve (\ref{estimatedROC}), proceed
	parameterically in $t$ as follows:
	\begin{itemize}
		\item[--] for $t$ taking values on a finite positive grid 
		\begin{itemize}
			\item[--] for $k=1,\ldots,p$    
			\begin{itemize}
				\item[--] for $b=1 \ldots B$, with large $B$
				\begin{itemize}
					\item[--] draw $x_{-b}^*$  uniformly from one of the $n_-$ Gaussian variables with mean 
					$x_-^{ik}, i=1,\ldots, n_-$ and standard deviation $\lambda_-^k$ 
					\item[--]  compute $\hat{f}_-^k(x_{-b}^*)$ and $\hat{f}_+^k(x_{-b}^*)$ 
					\item[--] draw $x_{+b}^*$  uniformly from one of the $n_+$ Gaussian variables with mean 
					$x_+^{ik}, i=1,\ldots, n_+$ and standard deviation $\lambda_+^k$ 
					\item[--]  compute $\hat{f}_-^k(x_{+b}^*)$ and $\hat{f}_+^k(x_{+b}^*)$ 
				\end{itemize}
			\end{itemize}
			\item[--]  compute $\widehat{\rm FPR}(t) = \frac1B \sum_{b=1}^B \
			\left(  \prod_{k=1}^p
			\frac{\hat{f}_+^k(x_{-b}^*)}{\hat{f}_-^k(x_{-b}^*)} > t \right) $
			\item[--]  compute $\widehat{\rm TPR}(t) = \frac1B \sum_{b=1}^B \
			\left(  \prod_{k=1}^p
			\frac{\hat{f}_+^k(x_{+b}^*)}{\hat{f}_-^k(x_{+b}^*)} > t \right)$ 
			\item[--] plot  $(\widehat{\rm FPR}(t), \widehat{\rm TPR}(t))$
		\end{itemize}
	\end{itemize}
	where $(A)$ is the indicator function of event $A$, which equals 
	1 if A is true and 0 otherwise.
\end{algorithm}
The paper by John and Langley\cite{JL1995} contains a discussion of the consistency of the 
{\sc Flexible Bayes} estimate which could be extended to consistency
of the estimated ROC function of equation~(\ref{estimatedROC})
via a continuous mapping argument.

\section{Case study: diagnosis of prostate cancer using biomarkers}
\label{case}
Prostate cancer (PCa) is the most frequent neoplasia diagnosis in men
in Europe and one of the most common causes of cancer related
death. Nowadays its correct diagnosis requires invasive tests (such as
biopsy and digital rectal examination), because the standard and still
widely used prostate specific antigene measurement (PSA, a
non-invasive tool) leads to high percentages of false positives and
false negatives and it is no longer recommended for screening
purposes. A lot of efforts are currently devoted worldwide to finding
non-invasive and easy-to-detect biomarkers to improve the diagnostic
route for prostate cancer. The biomarkers are meant to be used in
combination, possibly with PSA itself.

This diagnosis can be considered a classification
problem, and a binary one if it is simplified to PCa versus
non-PCa. ROC curves are used to evaluate the performance of the
classifiers.

A dataset was provided by Fondazione Edo e Elvo Tempia
(Biella, Italy) and described in the article by Mello-Grand et al.\cite{M2018}, in which
microRNAs 
and other clinical variables for prostate cancer detection were
investigated. MicroRNAs, small non coding RNA molecules which can be
released in body fluids (blood, urine, saliva) and are highly stable,
can control major cell pathways and can act as tumour suppressor or
oncogene. They were analysed by real-time quantitative polymerase chain
reaction, a biological technique which produces a continuous
measurement of each microRNAs (namely the $C_t$ level,
i.e. the point in time when DNA amplification
is first detected).

The dataset consists of $58$ PCa (the $P_+$ sample) and 170 non-PCa
patients (the $P_-$ sample), including $89$ benign
hyperplasias, $8$ precancerous lesions and $73$ healthy controls --
but this finer subdivision is not used here.
For each patient, two microRNAs and (log-transformed) 
PSA were combined to build the classifier.
The two microRNAs were selected after a cumbersome feature selection
procedure which combined statistical and practical aspects 
and it is ignored here for the sake of simplicity.
Let $X= (X_1, X_2, X_3)$ be the observation vector
(microRNA1, microRNA2, log(PSA)).

The maximum likelihood estimates of means and variance covariance
matrices under $P_-$ and $P_+$ are
\begin{align*}
\hat{\mu}_- = \begin{pmatrix}
4.952 \\ 5.463 \\ 1.403
\end{pmatrix} 
& \text{\quad}  \hat{\Sigma}_- =
\begin{pmatrix}
7.233 & 5.260 & 1.639\\
5.259 & 4.927 & 1.165\\
1.638 & 1.165 & 2.490
\end{pmatrix} \\
\hat{\mu}_+ = \begin{pmatrix}
6.833 \\ 6.939 \\ 2.518
\end{pmatrix} 
& \text{\quad} \hat{\Sigma}_+ =
\begin{pmatrix}
 3.570 & 3.167 & -0.098\\
 3.167 & 3.086 & -0.150\\
-0.098 & -0.149 &  0.656
\end{pmatrix}.
\end{align*}
If we assume  ${X}$ is multivariate normal, then we estimate
${X} \sim \text{MVN}(\hat{\mu}_-, \hat{\Sigma}_-)$
under $P_-$ and 
${X} \sim \text{MVN}(\hat{\mu}_+, \hat{\Sigma}_+)$ under $P_+$.
The best parametric classifier is Fisher QDA in equation~(\ref{QDAscore}),
since the covariance matrices differ. The associated ROC curve is
displayed in  Figure~\ref{casefigure}, left panel, dashed line.
If we insist on linear transformations of ${X}$, 
then the best one by Su and Liu\cite{SL1993},
given in equation~(\ref{linearbestscore}), is
\[ 
0.09872 \times X_1 + 0.04335 \times X_2 + 0.29222 \times X_3 \]
with Gaussian univariate distributions $\mathcal{N}(1.13560, 0.46137)$
under $P_-$ and $\mathcal{N}(1.71123, 0.11426)$ under $P_+$.
The associated ROC curve is
displayed in  Figure~\ref{casefigure}, left panel, solid line.
\begin{figure}
	\centering
	\includegraphics[width=1 \textwidth, height= .4 \textheight]{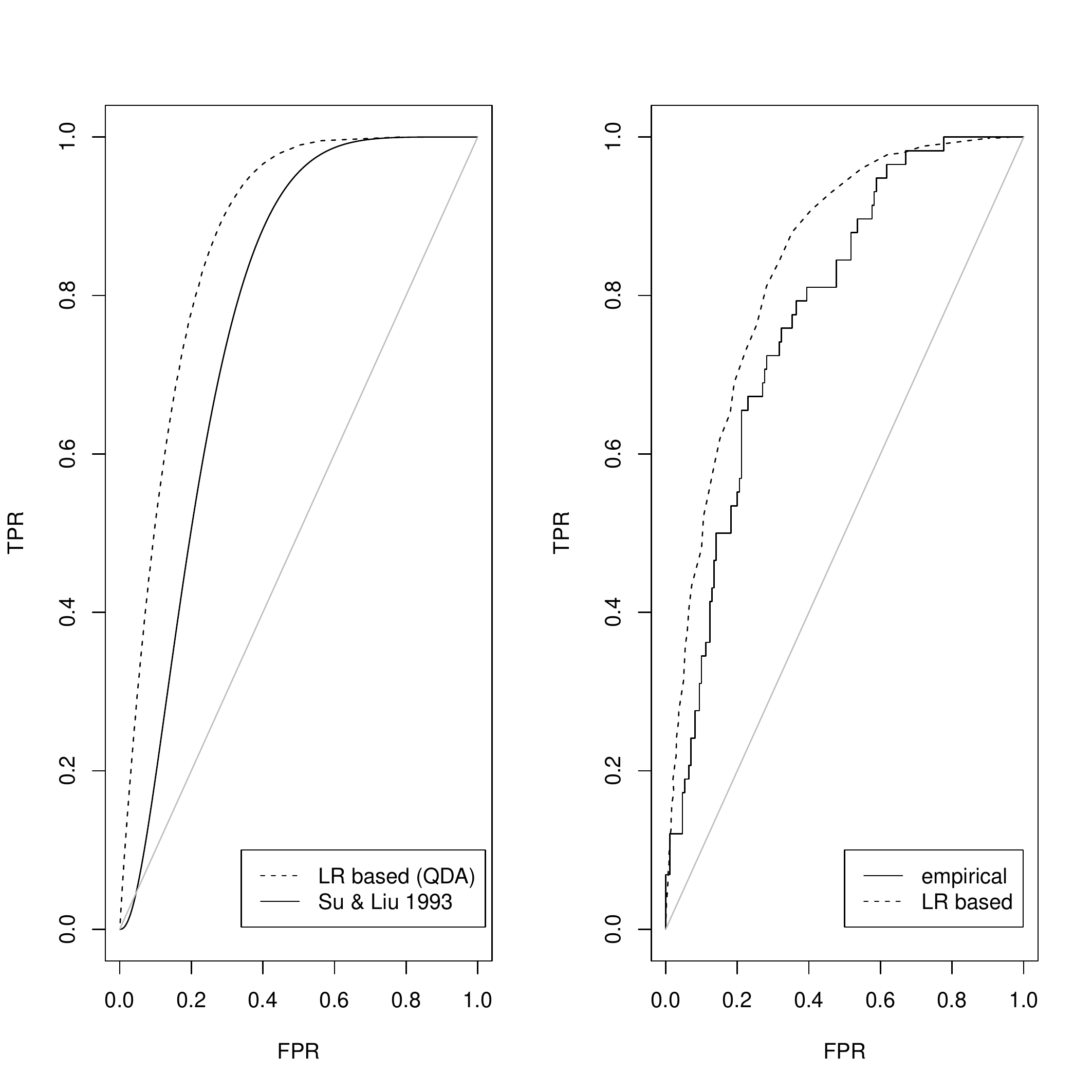}
	\caption{
		Left panel: parametric ROC curves comparison: QDA (dashed line)
		versus best linear combination in Su and Liu\cite{SL1993}(solid line).
		Right panel: nonparametric ROC curves comparison: Flexible Bayes (dashed line)
		versus empirical ROC as in Sing\cite{S2005} (solid line).
	}
	\label{casefigure}
\end{figure}

On the other hand, adopting a less restrictive nonparametric point of
view, we could apply Algorithm~\ref{algoestimatedROC} to get a good
approximation of a nonparametric LR based estimated ROC curve for
the Flexible Bayes classifier, displayed in
Figure~\ref{casefigure}, right panel, dashed line.  The solid line on
the right panel is instead the
usual staircase-shaped empirical ROC curve obtained by pairing the
empirical frequencies of the predictions for various thresholds
(obtained in this case with the R library described by Sing\cite{S2005}).  The
ROC of the Flexible Bayes classifier is different from and on average
lower than the QDA ROC,
since there is a price to pay for the greater generality of the
nonparametric approach, but it exhibits a definite advantage over the
empirical ROC, which appears to be a less efficient estimate of the underlying
true ROC.

\section{Conclusion}
This work focuses on ROC curves associated with LR based binary classification methods.
Nowadays, many new classification methods not based on the likelihood
are being proposed, especially in the Machine Learning literature.
The properties of these new classification methods will have to be
studied with mathematical methods. From a statistical point of view,
LR methods remain a fundamental reference which provide
principled and mathematically sound methods.
``There is nothing more useful than a good theory'', said once Kurt Lewin.

\begin{acks}
The authors thank  Giovanna Chiorino, Head of Cancer Genomics Lab,
Fondazione Edo e Elvo Tempia (Biella, Italy) for giving permission
to use the database of the case study and 
Eugenio Regazzini for carefully reading and
commenting an early version of this work.
\end{acks}

\begin{dci}
No conflict of interest to be disclosed.
\end{dci}


\end{document}